\newcommand{\C}{{\mathbb C}}
\newcommand{\N}{{\mathbb N}}
\newcommand{\R}{{\mathbb R}}
\newcommand{\cA}{{\mathcal A}}
\newcommand{\cJ}{{\mathcal J}}
\newcommand{\cN}{{\mathcal N}}
\newcommand{\cO}{{\mathcal O}}
\newcommand{\cD}{{\mathcal D}}
\newcommand{\cC}{{\mathcal C}}
\newcommand{\cS}{{\mathcal S}}
\newcommand{\SU}{\mathrm{SU}}
\newcommand{\Sp}{\mathrm{Sp}}
\newcommand{\SL}{\mathrm{SL}}
\newcommand{\SO}{\mathrm{SO}}
\newcommand{\U}{\mathrm{U}}
\newcommand{\be}{\begin{equation}}
\newcommand{\ee}{\end{equation}}
\newcommand{\beq}{\begin{eqnarray}}
\newcommand{\eeq}{\end{eqnarray}}
\newcommand{\bes}{\begin{eqnarray}}
\newcommand{\ees}{\end{eqnarray}}
\newcommand{\mat} [2] {\left ( \begin{array}{#1}#2\end{array} \right ) }
\newcommand{\su}{{\mathfrak{su}}}
\renewcommand{\sp}{{\mathfrak{sp}}}
\renewcommand{\sl}{{\mathfrak{sl}}}
\newcommand{\so}{{\mathfrak{so}}}
\newcommand{\la}{\langle}
\newcommand{\ra}{\rangle}
\newcommand{\f}{\frac}
\def\nn{\nonumber}
\def\pp{\partial}
\newcommand{\w}{\wedge}
\def\vphi{\varphi}
\def\eps{\epsilon}
\newcommand{\id}{\mathbb{I}}
\def\act{\triangleright}
\def\ve{\vec{e}}
\def\vx{\vec{x}}
\def\vu{\vec{u}}
\def\vp{\vec{p}}
\def\vv{\vec{v}}
\def\vC{\vec{C}}
\def\vcJ{\vec{\cJ}}
\def\vJ{\vec{J}}
\def\vK{\vec{K}}
\def\vL{\vec{L}}
\def\vN{\vec{N}}
\def\vsigma{\vec{\sigma}}
\def\arr{\rightarrow}
\def\act{\,\triangleright\,}
\def\bz{\bar{z}}
\def\he{\hat{e}}
\def\hv{\hat{v}}
\def\hp{\hat{p}}
\def\hJ{\hat{J}}
\def\dd{\mathrm{d}}
\newtheorem{theorem}{Theorem}[section]
\newtheorem{lemma}[theorem]{Lemma}
\newtheorem{prop}[theorem]{Proposition}
\def\centerarc[#1](#2)(#3:#4:#5)
\def\centerarcnodes[#1](#2)(#3:#4:#5)(#6,#7)
\def\angcircle(#1)(#2)(#3:#4) {\coordinate(#1) at ($(#2)+({#4*cos(#3)},{#4*sin(#3)})$); }
\begin{document}

\title{Bubble Networks: Framed Discrete Geometry for Quantum Gravity}

\author{{\bf Laurent Freidel}}\email{lfreidel@perimeterinstitute.ca}
\affiliation{Perimeter Institute for Theoretical Physics, 31 Caroline Street North, Waterloo, Ontario, Canada N2L 2Y5}

\author{{\bf Etera R. Livine}}\email{etera.livine@ens-lyon.fr}
\affiliation{Perimeter Institute for Theoretical Physics, 31 Caroline Street North, Waterloo, Ontario, Canada N2L 2Y5}
\affiliation{Université de Lyon, ENS de Lyon, UCBL, CNRS,
Laboratoire de Physique, 69342 Lyon, France}

\date{\today}

\begin{abstract}

In the context of canonical quantum gravity in 3+1 dimensions, we introduce a new notion of bubble network that represents discrete 3d space geometries. These are natural extensions of twisted geometries,
which represent  the geometrical data underlying loop quantum geometry and are defined as networks of $\SU(2)$ holonomies.
In addition to the $\SU(2)$ representations encoding the geometrical flux, the bubble network links carry a compatible $\SL(2,\R)$ representation encoding the discretized frame field which composes the flux. In contrast with twisted geometries, this extra data allows to reconstruct the frame compatible with the flux unambiguously.
At the classical level this data represents a network of 3d geometrical cells glued together. The $\SL(2,\R)$ data contains information about the discretized 2d metrics of the interfaces between 3d cells and $\SL(2,\R)$ local transformations are understood as the group of area-preserving diffeomorphisms.
We further show that the natural gluing condition with respect to this extended group structure ensures that the intrinsic 2d geometry of a boundary surface is the same from the viewpoint of the two cells sharing it.
At the quantum level this gluing corresponds to a  maximal entanglement along the network edges.
We  emphasize that the nature of this extension of twisted geometries is compatible with the general analysis of gauge theories that predicts edge mode degrees of freedom at the interface of subsystems.

\end{abstract}

\maketitle
\tableofcontents

\section*{Introduction}

The goal of quantum gravity is to create a mathematically-consistent description of the space-time geometry unifying the visions of general relativity and quantum theory. Here we work in a canonical framework relying on a decomposition of the four-dimensional space-time as a 3d space geometry evolving in time. The  operational perspective we are developping is to decompose the 3d geometry into 3d cells, similarly to the decomposition of a manifold into charts. This consists in defining the state of geometry for each chunk of 3d geometry and describing the consistency conditions necessary to glue those 3d cells together in order to form the overall 3d geometry.

The central challenge of such a procedure is to understand what could be the geometrical elements one should keep at the most fundamental discrete level in order to capture the key symmetry of the theory, that is diffeomorphisms. While the full answer to that question is still awaiting, there has been important recent progress in that direction.
The key idea is to first relate the process of discretizing a gravitational system as being dual to the process of subdividing a continuum gravitational system into simpler elements \cite{Freidel:2011ue,Freidel:2013bfa}.
The second point is the understanding that, when one subdivides a gauge theory, the gauge symmetries are then promoted to local boundary symmetries \cite{Freidel:2015gpa,Donnelly:2016auv,Rovelli:2013fga} . The mechanism behind this is that the presence of gauge symmetry in the total system reveals boundary degrees of freedom - {\it edge modes} - along the subdivision cut and these edge modes  form a representation of the boundary symmetry group \cite{Donnelly:2016auv}.
At the discrete level, this boundary symmetry group is attached to each 2d interface between 3d cells (or equivalently the dual link) as a remnant of the continuous gauge symmetry. 
The question is then to understand what is  the proper boundary symmetry group for  gravitational edge modes.

In the context of the first order formulation of general relativity, as used e.g. in loop quantum gravity, it has been clear for  a while that the boundary symmetry group should include the local $\SU(2)$ group descending from the local  Lorentz gauge  transformations.
More recently, it has further been  understood that the boundary symmetry  should also include the group of area-preserving diffeomorphisms of the boundary, which is isomorphic to a local $\SL(2,\R)$ group.
More precisely, it has been shown in \cite{Freidel:2015gpa} that the generators of this $\SL(2,\R)$ group are given by the components of the 2-dimensional metric on each 2d interface between 3d cells. From this perspective, the extension from $\SU(2)$ to $\SU(2)\times\SL(2,\R)$ appears necessary  in order to include diffeomorphisms at the fundamental level.
In this work we show how to implement this idea at the level of discrete 3d geometries and how it can naturally be viewed as an extension of the notion of twisted geometry  \cite{Freidel:2010aq} and its spinor implementation \cite{Girelli:2017dbk,Freidel:2010tt,Livine:2011gp,Livine:2011zz}.  Conversely we show that one recovers  twisted geometries when one fixes the conformal gauge for the 2d metrics of the two dimensional interfaces.

Refining our description of the 3d geometry, we advocate to consider every 3d cells as bubbles, meaning that we will describe the 3d geometry of each cells as the state of the 2d geometry of its boundary. Then the 3d cells are glued along shared boundary surfaces and consistency conditions turn into matching constraints between the two descriptions of the geometry of the boundary from the perspective of the two 3d cells sharing it. This picture leads to 3d geometry as a networks of bubbles, see fig.\label{fig:soapbubble}, similar in spirit to the cellular decompositions used to formulate discrete topological quantum field theories and topological state-sums.
  \begin{figure}[h!]
	\centering
		\includegraphics[scale = 0.25]{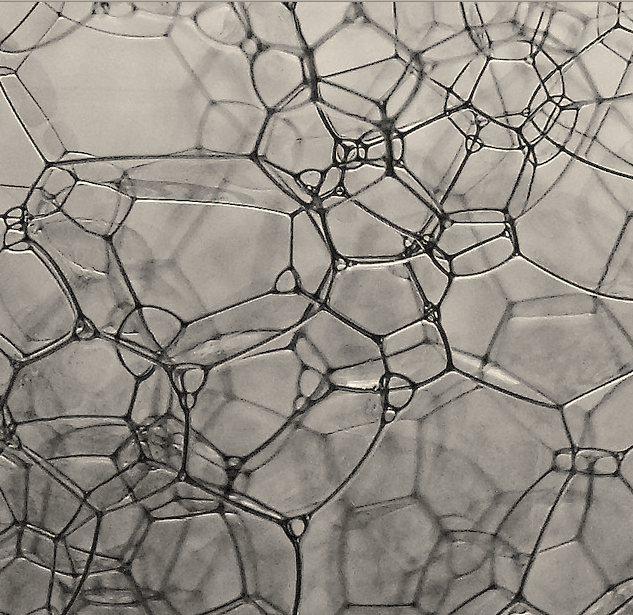}	
	\caption{Example of a bubble network, experimentally realized with soap bubbles: bubbles fill up the 3d space and are glued along surface patches. \label{fig:soapbubble}}
\end{figure}

On the one hand, such a description fits with the idea of a quasi-local holography in quantum gravity: the dynamically-relevant degrees of freedom of the 3d bulk geometry within a cell would be entirely encoded in the state of 2d geometry on the cell's boundary. On the other hand, it can also be interpreted from a coarse-graining point of view. The 3d space is thought as made of elementary chunks of a fixed given 3d geometry, e.g. flat or homogeneously curved 3d geometry, so that the only remaining degree of freedom is the embedding data of the 2d boundary surface within this 3d geometry. Then as one would merge those bubbles together to define a coarse-graining flow towards a  description of this 3d geometry at larger scale, one would define larger bubbles from gluing smaller bubbles together, coarse-grain the bulk geometry of those bubbles and derive the 2d geometry state of the larger bubble surfaces. For instance, this is the standard interpretation of spin network states for 3d geometry defined in Loop Quantum Gravity and interpreted as the quantum counterpart of  discrete twisted geometries \cite{Freidel:2010aq}. Finally, once we have a consistent definition of bubble networks, the goal would be to formulate a quantum version of relativity, i.e. understand how changes of observers, through diffeomorphisms and scale transformations, translate to changes in the bubble network, as modifying the atlas of the 3d geometry and coarse-graining or refining the bubbles.

\medskip

In the present paper, we present the definition of bubble networks as extended twisted geometries. Twisted geometries were introduced as a generalization of Regge triangulations to describe the discrete classical geometry of spin network states in loop quantum gravity \cite{Freidel:2010aq,Freidel:2010bw}. They describe the geometry of a graph dressed with $\SU(2)$ group elements on its edges. The data associated with each graph node can be interpreted in terms of a chunk of 3d volume whose geometry is encoded into the area flux of its surface elements. In this framework  an edge linking two nodes is dual to the boundary surface between the two corresponding 3d cells, and the $\SU(2)$ group element living on that edge gives the mapping between the associated two 3d reference frames.
In the twisted geometry interpretation, part of the $\SU(2)$ group element ensures the matching of the normal vector to the boundary surface between two neighboring 3d cells, while the remaining degree of freedom, called ``twist angle'', is understood to encode the extrinsic curvature \cite{Dittrich:2008ar,Dittrich:2008va,Dupuis:2012yw}. This allows to represent the phase space of loop quantum gravity where the Ashtekar-Barbero connection, used as configuration variable, actually mixes the intrinsic geometry (through the 3d spin-connection) and the extrinsic curvature. In this context, the identification of the twist angle has been crucial in understanding the embedding of the canonical discrete 3 geometry into the 4d space-time geometry and was translated into an embedding of the $\SU(2)$ group structure into $\SL(2,\C)$ specially useful to clarify the reformulation of the dynamics of loop quantum gravity in terms of spinfoam path integrals \cite{Speziale:2012nu}.
%
%
Besides these achievements, one major drawback of the twisted geometry picture is that although geometrical flux agrees, it generally gives inconsistent frame geometries \cite{Haggard:2012pm} across the boundaries\footnote{We can think of fluxes as discrete analog  Lie algebra valued 2-forms, while frames are the discrete analog of  Lie algebra valued 1-form.}

In order to correct this inconvenient feature of twisted geometry,  we propose to extend the $\SU(2)$ structure to a $\SU(2)\times\SL(2,\R)$ structure, where the new $\SL(2,\R)$ sector will encode the 2d geometry state of each bubble, i-e the discrete frame field,  while the $\SU(2)$ sector still encodes the area geometry and its transport from one bubble to the neighboring bubbles. The matching constraints resulting the gluing of bubbles which expresses that the area derived form the frame agrees with the area derived from the flux, translate into a Casimir balance equation, between the $\SU(2)$ sector and the $\SL(2,\R)$ sector. This relates natural  algebraic condition on Casimirs with  the compatibility condition between the intrinsic and extrinsic geometries of each bubble.

From the point of view of loop quantum gravity, this consists in an extension of the classical phase space in order to take seriously the dual surface interpretation of intertwiners for spin network states. Indeed, the nodes of a spin network states carry intertwiners, i.e. $\SU(2)$-invariant tensors, which are interpreted as quantum (convex) polyhedra (embedded in flat 3d space) \cite{Bianchi:2010gc,Livine:2013tsa,Barbieri:1997ks,Baez:1999tk}. Nevertheless, when coarse-graining loop quantum gravity, curvature naturally builds up at the spin network nodes \cite{Livine:2013gna,Dittrich:2014wpa,Dittrich:2014wda} and it becomes necessary to allow for ``curved nodes'', corresponding to quantum curved polyhedra (e.g. embedded in spherical or hyperbolic space) \cite{Charles:2016xwc,Charles:2016xzi}, and to allow for the 3d embedding of the boundary surface to fluctuate, thereby allowing for bubbles with arbitrary geometry.
The line of research we pursue in the present work parallels the logic developed in \cite{Freidel:2015gpa,Freidel:2016bxd}, which couples surface geometry degrees of freedom to the Ashtekar triad-connection variables leading to  coupled spin networks and conformal field theories at the quantum level. Nevertheless, we start with discrete surface geometries for the bubbles, so that the framework we propose is much simpler, though less rich, than the ``loop gravity string '' picture introduced in \cite{Freidel:2016bxd}.
It nevertheless admits a clear geometrical interpretation and a clear mapping back to the usual twisted geometry framework through a straightforward gauge-fixing procedure.

In the first section, we  start with a quick review of twisted geometries. Then based on the canonical analysis of the symplectic structure for the 2d geometry on boundary surfaces worked out in \cite{Freidel:2015gpa}, we will introduce a discretized version of the 2d geometry for the bubbles and define the phase space for the bubble networks. In the second section, we will show how a symplectic reduction of the bubble network phase space leads back to twisted geometries. We will further interpret this as the choice of the conformal gauge for the 2d geometry on the bubble surfaces.

\section{Discrete Bubble Networks}

The goal of this section is to introduce the discretized geometry of bubble networks and provide a clean mathematical definition of gluing bubbles. As a reference, we start by recalling the algebraic structure of twisted geometries in loop quantum gravity and their phase space. We then turn to bubble networks, describe the phase space of the 2d geometry on the bubbles' surfaces and construct bubble networks. They will turn out to be realized as twisted geometries augmented with the intrinsic geometry data of the bubbles' surface. This provides the discrete network version of the metric-flux algebra introduced in \cite{Freidel:2015gpa} as an upgrade for the holonomy-flux algebra underlying loop quantum gravity.

\subsection{A quick review of twisted geometry and spin(or) networks}

Twisted geometries are the discrete geometrical structure underlying loop quantum gravity.
They are networks of $\SU(2)$ holonomies encoding changes of reference frames and parallel transport from one point of space to another.  They form the classical structure for spin network states of quantum geometry.
More precisely, they are defined with reference to a graph, or network, dressed with algebraic data. Let us consider an oriented (closed) graph $\Gamma$. The twisted geometry phase space on the graph $\Gamma$ is defined as one copy of the $T^*\SU(2)$ phase space attached to each link of the graph, together with closure constraints  generating a $\SU(2)$ gauge invariance at each node. For each edge $e$, we write $s(e)$ and $t(e)$ respectively for the source and target nodes of the edge. Then we introduce one $\SU(2)$ group element $h_{e}\in\SU(2)$ for each edge $e$ and a pair of 3-vectors $\vJ_{e}^s$ and $\vJ_{e}^t$ on each edge corresponding to respectively to its source and target vertices. The $\SU(2)$ group elements are considered as 2$\times$2 matrices. From the viewpoint of a vertex $v$, we therefore have one vector $\vJ_{e}^v$ for each edge $e$ attached to $v$. We endow those variables with the $T^*\SU(2)$ Poisson brackets:
\be
\label{hJalgebra}
\left|\begin{array}{lcl}
\{(J^{s}_{e})^{a},(J^{s}_{e})^{b}\}&=& \eps_{abc}(J^{s}_{e})^{c}
\,,
\vspace*{1mm}\\
\{(J^{t}_{e})^{a},(J^{t}_{e})^{b}\}&=& -\eps_{abc}(J^{t}_{e})^{c}
\,,
\end{array}\right.
\qquad
\left|\begin{array}{lcl}
\{\vJ^{s}_{e},h_{e}\}&=&\f i2 h_{e}\vsigma
\,,
\vspace*{1mm}\\
\{\vJ^{t}_{e},h_{e}\}&=& \f i2\vsigma h_{e}
\,,
\end{array}\right.
\qquad
\left|\begin{array}{lcl}
\{\vJ^{s}_{e},\vJ^{t}_{e}\}&=&0
\,,
\vspace*{1mm}\\
\{h_{e},h_{e}\}&=& 0
\,,
\end{array}\right.
\ee
where $\eps_{abc}$ is the rank-3 totally antisymmetric tensor and the $\sigma^{a}$'s are the Pauli matrices, normalized such that they square to the identity, $(\sigma_{a})^2=\id_{2}$, and with commutation relations $[\sigma_{a},\sigma_{b}]=2i\eps_{abc}\sigma_{c}$.
We supplement these brackets with two sets of constraints:
\begin{itemize}
\item a matching constraint along each edge $e$, imposing that each target vector is the transport of the source vector by the group element along the edge:
\be
\vJ^t_{e}=h_{e}\triangleright\vJ^s_{e}\,,
\quad \mathrm{or}\quad
\vJ^t_{e}\cdot\vsigma=h_{e}\,(\vJ^s_{e}\cdot\vsigma)\,h_{e}^{-1}
\ee

\item a closure constraint around each vertex $v$, imposing that the oriented sum of all the vectors vanishes:
\be
\vC_{v}=\sum_{e\ni v}\eps^e_{v}\vJ_{e}^v=0\,,
\ee
where the sign $\eps^{e}_{v}=\pm$ registers the relative orientation of the edge with respect to the vertex, positive if the edge is outgoing  $v=s(e)$ and negative if the edge is incoming $v=t(e)$.

\end{itemize}
The matching constraints effectively reduce the variables attached to each edge to the $T^*\SU(2)$ phase space, while the closure constraint at a vertex $v$ generates $\SU(2)$ gauge transformations $\exp\{\vu\cdot\vC_{v},\bullet\}$ acting on all the vectors around that vertex:
\be
\begin{array}{l}
\forall e \textrm{ such that } v=s(e)
\,,
\quad
\vJ_{e}^v
\,\longmapsto\,
k_{v}\triangleright\vJ_{e}^v
\,,
\quad
h_{e}
\,\longmapsto\,
h_{e}k_{v}^{-1}
\vspace*{2mm}\\
\forall e \textrm{ such that } v=t(e)
\,,
\quad
\vJ_{e}^v
\,\longmapsto\,
k_{v}\triangleright\vJ_{e}^v
\,,
\quad
h_{e}
\,\longmapsto\,
k_{v}h_{e}
\end{array}
\ee
The standard geometrical interpretation of the closure constraint is that $N$ vectors $\vJ_{1},..,\vJ_{N}\in\R^3$ whose sum vanishes uniquely determine a convex polyhedra in the flat space $\R^3$ with $N$ faces such that the $\vJ_{i}$ are the normal vectors to each face of the polyhedron with their norm giving the area of the corresponding face (see e.g. \cite{Bianchi:2010gc}). The twisted geometry is then the collection of such polyhedra glued together by the matching constraints. The gluing of two neighboring polyhedra is loose in the sense that it only requires a matching of the area of the glued faces but not an actual matching of their precise shape.

The spinning geometry interpretation of this structure is more flexible and sidesteps the shape mismatch issue \cite{Freidel:2013bfa}.
The 3d geometry is constructed as a cellular complex from 3d flat cells whose boundary 2-cells are minimal surfaces between 1-cells. The ``normal vectors'' $\vJ$'s of the surfaces are defined as an integrated angular momentum, computed as the holonomy of a specific connection around the surfaces (see also \cite{Charles:2016xzi}). This holonomy is matched across the boundary when gluing two bubbles and is not generically the normal vector to a flat 2d face.

\medskip

A useful reparametrization of the twisted geometry phase space is in terms of spinor networks \cite{Freidel:2010bw,Borja:2010rc,Livine:2011gp,Livine:2011zz,Dupuis:2012vp}.
This spinorial parametrization of  twisted geometries led to a systematic construction of coherent intertwiners \cite{Freidel:2010tt,Girelli:2017dbk} and semi-classical spin network states \cite{Dupuis:2011fz,Bonzom:2012bn} and remarkable exact computations of spinfoam transition amplitudes \cite{Freidel:2012ji,Bonzom:2015ova}.
One introduces a complex 2-vector, or spinor, $z_{e}^v\in\C^2$ on each half-edge. Each spinor is endowed with a canonical Poisson bracket:
\be
\{z_{A},\bz_{B}\}=-i\delta_{AB}\,,
\ee
while  spinor components living on different half-edges commute with each other. We use the standard ket notation for the spinors, writing $|z\ra\in\C^2$ with the corresponding  dual spinor denoted $|z]$:
\be
|z\ra=\mat{c}{z_{0}\\z_{1}}
\,,\qquad
\la z |=\mat{cc}{\bz_{0}&\bz_{1}}
\,,\qquad
|z]=\mat{cc}{0 & -1 \\ 1 & 0}|\bz\ra=\mat{c}{-\bz_{1}\\\bz_{0}}
\,,\qquad
[z|=\mat{cc}{-z_{1}&z_{0}}
\,.
\ee
We can define both the vectors and the $\SU(2)$ group elements in terms of the spinors:
\be
\vJ_{e}^s=\f12\la z_{e}^s|\vsigma|z_{e}^s\ra
\,,\quad
\vJ_{e}^t=\f12[ z_{e}^t|\vsigma|z_{e}^t]=-\f12\la z_{e}^t|\vsigma|z_{e}^t\ra
\,,\qquad
h_{e}=\f{|z^t_{e}]\la z_{e}^s|-|z^t_{e}\ra[ z_{e}^s|}{\sqrt{\la z_{e}^s|z_{e}^s\ra\la z_{e}^t|z_{e}^t\ra}}
\,.
\ee
Upon assuming a norm-matching condition along every edge, $\la z_{e}^s|z_{e}^s\ra=\la z_{e}^t|z_{e}^t\ra$, this definition ensures that the $h_{e}$'s lay in $\SU(2)$, i.e. $h_{e}^\dagger=h_{e}^{-1}$,  and that these group elements maps the source spinors onto the dual of the target spinors, $h_{e}\,|z_{e}^s\ra=\,|z_{e}^t]$, thus  mapping the source vector onto the target vectors, $h_{e}\triangleright \vJ_{e}^s=\vJ_{e}^t$. Moreover, one can check that these definitions imply that the $T^*\SU(2)$ Poisson brackets \eqref{hJalgebra} between the $h_{e}$ and the vectors $\vJ_{e}^v$ are weakly satisfied assuming the norm-matching conditions for spinors \cite{Freidel:2010bw}. 
This means that the symplectic quotient of  the spinorial phase space $(\C^4)^{\times E}$ (where $E$ is the number of edges of the graph $\Gamma$)  by the norm-matching conditions gives $(T^*\SU(2))^{\times E}$. Then we still impose the closure constraints and quotient by the resulting $\SU(2)$ gauge invariance at every node of the graph.
In this sense, the spinors provide Darboux coordinates for the twisted geometry phase space.

Finally, one can quantize these spinor networks and define wave-functions as holomorphic polynomials in the spinors satisfying the $\SU(2)$-gauge invariance at every node. This leads to the spin networks of loop quantum gravity, with $\SU(2)$ representations along the graph edges and $\SU(2)$ intertwiners at the nodes \cite{Borja:2010rc,Livine:2011gp,Bonzom:2012bn}.

\subsection{Discretization of the surface geometry: $\SU(2)\times \SL(2,\R)$ and the Casimir balance equation}

Now we will start with the phase space for surface degrees of freedom in general relativity, introduce a discretized version of this phase space of 2d geometries on the bubbles' surface and glue them consistently to define the phase space for bubble networks, which will turn out to be the twisted geometry phase space augmented with an additional $\sl_{2}$ structure encoding the 2d  intrinsic geometry of the bubbles boundaries.

As worked out in \cite{Freidel:2015gpa}, the symplectic structure of general relativity in its first order formulation in terms of the Ashtekar-Barbero variables induces a boundary symplectic structure on a 2d boundary surface $\cS$ (on the space-like canonical slice) such that the two components of the triad tangent to the surface are conjugate to one another:
\be
\{e_{1}^a(x),e_{2}^b(y)\}=\delta^{ab}\delta^{(2)}(x,y)
\,,\quad x,y\in \cS
\,.
\ee
One of the main point shown in \cite{Freidel:2015gpa} is that this data allow the reconstruction of an $SU(2)$ flux field by taking the wedge product of frames and a local metric on the sphere
 by considering the scalar products of those two vectors.
\be
X_a(x)\equiv \epsilon_{abc} (e^b_1 e^c_2- e^b_2 e^c_1)(x),\qquad 
q_{AB}(x)\equiv \sum_{a=1}^3e_{A}^ae_{B}^b(x). 
\ee
Moreover the flux field generate an $SU(2)$ algebra and the metric generates an $SL(2,\R)$ algebra. 
The goal is to understand how one can naturally discretize this structure and embed it into the loop gravity phase space.
It will be convenient to express the  2d metric in terms of a matrices of scalar product:
\be
{}^{2d}q
=
\mat{cc}{|\ve_{1}|^2 & \ve_{1}\cdot\ve_{2}\\
\ve_{1}\cdot\ve_{2} &|\ve_{2}|^2}
\,.
\ee

We apply this to each bubble, considered as a piecewise-linear surface made of several flat 2d patches. Then bubbles will be glued together through those patches. Each 2d patch thus carries a pair of vectors in $\R^3$, given by the surface integral of  the triad projected onto the surface. Let us look at a bubble with $N$ patches, and thus with $N$ pairs of vectors $((\ve_{1})_{i},(\ve_{2})_{i})_{i=1..N}$. Since the $\ve_{1}$'s are canonically conjugate to the $\ve_{2}$'s, we find it convenient to write them as $\vx\equiv\ve_{1}$ and $\vp\equiv\ve_{2}$, explicitely reflecting the phase space structure on the bubble:
\be
\{x_{i}^a,p_{j}^b\}=\delta_{ij}\delta^{ab}\,.
\ee
Let us focus on a single surface patch and drop its $i$ label. This is exactly the phase space of a three-dimensional particle, parametrized by the two vectors $\vx$ and $\vp$.
We introduce the angular momentum observables:
\be
J^{ab}=x^{a}p^{b}-x^{b}p^{a}\,,
\quad
J_{c}=\eps^{abc}x^{a}p^{b}\,,
\quad
\{J_{a},J_{b}\}=\eps^{abc}J_{c}\,.
\ee
These are  the generators of the Lie algebra  $\su_{2}\sim\so_{3}$ of 3d rotations, which act as usual as $3\times 3$ matrices on 3d vectors:
\be
\left|
\begin{array}{l}
\{J_{a},x_{b}\}=\eps^{abc}x^{c}\,,
\vspace*{1mm}\\
\{J_{a},p_{b}\}=\eps^{abc}p^{c}\,,
\end{array}
\right.
\qquad
\left|
\begin{array}{l}
e^{\{\vu\cdot\vJ\,,\,\cdot\,\}}\,\vx=\cO\,\vx
\,,
\vspace*{1mm}\\
e^{\{\vu\cdot\vJ\,,\,\cdot\,\}}\,\vp=\cO\,\vp
\,,
\end{array}
\right.
\quad
\cO\,\in\SO(3)\,.
\ee
The Casimir of the $\su_{2}$ algebra is the norm of the angular momentum:
\be
\cC=\vJ^{2}\,,\qquad
\{\cC,J^{a}\}=0\,.
\ee
We further introduce  rotation-invariant observables, given by the norms of $\vx$ and $\vp$ and their scalar product:
\be
\ell_{0}=\vx\cdot\vp\,,\quad
\ell_{-}=\vx^{2}\,,\quad
\ell_{+}=\vp^{2}\,,
\qquad
\{J^{a},\ell_{\beta}\}=0\,.
\ee
We label the $\ell$'s with greek indices to emphasize the difference with the vector indices.The observable $\ell_{0}$ is the generator of  dilatations on the phase space, $(\vx,\vp)\,\arr\,(e^{+\lambda}\vx,e^{-\lambda}\vp)$.
These scalar product observables turn out to form a $\sp_{2}$ algebra:
\be
\{\ell_{0},\ell_{\pm}\}=\pm2\ell_{\pm}\,,\quad
\{\ell_{+},\ell_{-}\}=4\ell_{0}\,.
\ee
This algebra is also isomorphic to the $\sl(2,\R)$ Lie algebra, explicitly realized through a simple change of basis:
\be
j_{3}=\f12(\ell_{-}+\ell_{+})
\,,\quad
k_{1}=\ell_{0}
\,,\quad
k_{2}=\f12(\ell_{-}-\ell_{+})
\,,\quad
\left|
\begin{array}{l}
\{j_{3},k_{1}\}=2k_{2}\,,\\
\{j_{3},k_{2}\}=-2k_{1}\,,\\
\{k_{1},k_{2}\}=-2j_{3}\,.
\end{array}
\right.
\ee
These observables generate linear canonical transformations on the $(\vx,\vp)$ phase  space\footnotemark:
\footnotetext{
The explicit exponentiated action of the $\ell$'s is easily computed as:
$$
e^{\{\lambda_{0}\ell_{0}+\lambda_{+}\ell_{-}+\lambda_{-}\ell_{+},\,\cdot\,\}}
\,\mat{c}{\vx\\ \vp}
=
\mat{cc}{-\lambda_{0} & -\lambda_{-}\\ \lambda_{+} & \lambda_{0}}\,\mat{c}{\vx \\ \vp}
=
M\,\mat{c}{\vx\\ \vp}\,,
\quad
tr M=0
\,,
\quad
M^{2}=(\lambda_{0}^{2} -\lambda_{+}\lambda_{-})\,\id=\Delta\,\id\,,
$$
$$
\Omega=e^{M}=\cosh\sqrt{\Delta}\,\id+\f{\sinh\sqrt{\Delta}}{\sqrt{\Delta}}\,M
\quad\textrm{if}\,\, \Delta>0
\quad\textrm{or}\quad
\cos\sqrt{-\Delta}\,\id+\f{\sin\sqrt{-\Delta}}{\sqrt{-\Delta}}\,M
\quad\textrm{if}\,\, \Delta<0
\,.
$$
}
\be
\mat{c}{\vx\\ \vp}\,\arr\,
\Omega\,\mat{c}{\vx\\ \vp}
\,,\quad
\det_{2\times 2}\Omega=1
\,,\quad
\Omega\in\Sp(2)=\SL(2,\R)\,.
\ee
It is convenient to repackage the $\ell$'s in a $2\times 2$ matrix,
\be
D
=
\mat{c}{\vx\\ \vp}\mat{cc}{\vx& \vp}
=
\mat{cc}{x^{2} & \vx\cdot\vp\\ \vx\cdot\vp & \vp^{2}}
=
\mat{cc}{\ell_{-} & \ell_{0}\\ \ell_{0} & \ell_{+}}\,.
\ee
Canonical transformations acts by conjugation on this matrix, $D \,\arr\, \Omega D \,{}^{t}\Omega$.
%
The Casimir of the $\sl_{2}$  algebra is the determinant of the $D$ matrix.
\be
\cD=\det D= \ell_{-}\ell_{+}-\ell_{0}^{2}=j_{3}^2-k_{1}^2-k_{2}^2\,,
\qquad
\{\cD,\ell_{\alpha}\}=0\,.
\ee
Since the Poisson brackets of the angular momentum with the scalar product observables vanish, $\{J^{a},\ell_{\beta}\}=0$, the canonical transformations $\Omega\in\Sp(2)=\SL(2,\R)$ actually commute with the 3d rotations $\cO\in\SO(3)$.
Moreover these two sets of observables satisfy a Casimir balance equation:
\be
\vJ^{2}=|\vx\w\vp|^{2}=\vx^{2}\vp^{2}-(\vx\cdot\vp)^{2}=\det D\,.
\ee

Keeping in mind that the two vectors $\vx$ and $\vp$ are in fact the integrated components of the triad on the surface, $\ve_{1}$ and $\ve_{2}$,  the $\ell$'s observables encode the surface intrinsic metric data: the matrix $D$ is the integrated induced 2d metric $q$ on the patch and the determinant $\cD$ is the squared density factor $(\det\,q )$. The vector $\vJ$ is the normal vector to the surface patch and encodes the extrinsic curvature describing the embedding of the surface into 3d space. In this approach, the normal vectors are truly angular momenta, as in the spinning geometry interpretation described in \cite{Freidel:2013bfa}.
The area of the patch can be derived from either the intrinsic surface geometry or the extrinsic geometry as $\cA=|\vJ|=\sqrt{\det D}$, with this Casimir balance equation playing the role of a Gauss-Codazzi equation expressing the compatibility of the intrinsic and extrinsic geometries of the surface.

Finally, as the vectors encode the 2d metric, the $\SL(2,\R)$ transformations, realizing canonical transformations of the pair of vectors $(\vx,\vp)$, are to be understood as area-preserving diffeomorphisms of the discretized bubble surface.

\subsection{Gluing bubbles and the bubble network phase space}

Once we have the phase space structure for each bubble, we would like to consider a network of bubbles glued with each other forming the 3d space. We describe the combinatorics of the bubble network by its dual 1-skeleton, introducing the graph $\Gamma$ whose nodes represent the bubbles and whose edges link  pairs of bubble glued with each other. Every edge is thus dual to a surface patch of the two corresponding glued bubbles. We will consider a compact 3d space, corresponding a closed graph $\Gamma$.

Let us now dress this graph with the surface variables introduced above encoding the 2d boundary geometry of the bubbles. This leads to a network of vectors. Around each vertex $v\in\Gamma$, that is for each bubble, we dress each edge $e$ attached to the vertex $v\in e$ with  a pair of vectors $(\vx^{v}_{e},\vp^{v}_{e})\in(\R^3)^{\times 2}$. In order to properly define the symplectic form and transports between the bubbles, it is convenient to orient the graph $\Gamma$. 
Each edge $e$ thus has one canonical vector pair attached to its source vertex $s(e)$ and one attached to its target vertex $t(e)$.
%
%
We need flip to the symplectic structure at the target of every edge $e$:
\be
\{(x^{s}_{e}){}^{a},(p^{s}_{e}){}^{b}\}=\delta^{ab}\,,
\qquad
\{(x^{t}_{e}){}^{a},(p^{t}_{e}){}^{b}\}=\,-\delta^{ab}\,.
\ee
%
This sign flip corresponds to exchanging the role of the position and conjugate momentum, which corresponds geometrically to an orientation flip (switching between the interior and exterior of the bubble).
We then impose a vector  constraint around each vertex and a symplectic constraint along each edge:
\begin{itemize}
\item {\it Around each vertex $v$}, one considers the angular momentum vectors of each of the particles and defines the closure constraint:
\be
\sum_{e\ni v} \eps^{e}_{v}\vJ^{e}_{v}=0\,,\qquad
\vJ_{e}^v=\vx_{e}^v\w\vp_{e}^v
\,,
\ee
where the sign $\eps^{e}_{v}=\pm$ registers the relative orientation of the edge with respect to the vertex, positive if the edge is outgoing  $v=s(e)$ and negative if the edge is incoming $v=t(e)$. This constraint generators generate $\SO(3)$ gauge transformations simultaneously rotating all the vectors  around each vertex.

\item {\it Along each edge $e$}, one considers the symplectic generators $\ell$, or equivalently the Gram matrix $D$, of both pairs of vectors at the source and target vertices of the edge and defines the matching constraint:
\be
\forall \alpha\,,\quad
D^{s}_{e}=D^{t}_{e}\,,
\ee
meaning that the norms and scalar product of the  vectors $\vx$ and $\vp$ at both ends of the edge must match. As illustrated on fig.\ref{fig:gluing}, this simply amounts to matching the 2d geometry of the surface patch dual to the edge from the viewpoint of the two bubbles sharing it. 
In particular, the determinant of the $D$-matrices must match, i.e. $|\vJ_{e}^s|=|\vJ_{e}^t|$ which is the standard matching constraint for twisted geometries. Here we introduce a more general symplectic matching for each edge. This matching constraint generates a $\SL(2,\R)$-gauge invariance along each edge, which is physically interpreted as a gauge invariance of the bubble network under 2d surface diffeomorphisms.

\end{itemize}
\begin{figure}[h!]
\begin{tikzpicture}[scale=2.5]
\coordinate (A1) at (0,0);
\coordinate (A2) at (0.8,0.35);
\coordinate (A3) at (.2,.9);

\coordinate (A1a) at (-.3,-0.07);
\coordinate (A1b) at (.07,-0.3);
\coordinate (A2a) at (1.1,0.28);
\coordinate (A3a) at (-.1,.92);
\coordinate (A3b) at (.3,1.2);

\coordinate (B1) at (3,0);
\coordinate (B2) at (2.2,0.35);
\coordinate (B3) at (2.8,.9);

\coordinate (B1a) at (3.3,-0.07);
\coordinate (B1b) at (2.93,-0.3);
\coordinate (B2b) at (1.9,0.3);
\coordinate (B2a) at (2.02,0.05);
\coordinate (B3a) at (2.9,1.2);

\draw (A1) -- (A2) -- (A3) -- (A1);
\draw (B1) -- (B2) -- (B3) -- (B1);

\draw[dashed] (A1a) -- (A1) -- (A1b);
\draw[dashed] (A2a) -- (A2);
\draw[dashed] (A3a) -- (A3) -- (A3b);
\fill[fill=black,fill opacity=0.05] (A1a) -- (A1) -- (A1b);
\fill[fill=black,fill opacity=0.05] (A1a) -- (A1)  -- (A3) -- (A3a);
\fill[fill=black,fill opacity=0.05] (A3a) -- (A3) -- (A3b);
\fill[fill=black,fill opacity=0.05] (A3b) -- (A3) -- (A2) -- (A2a);
\fill[fill=black,fill opacity=0.05] (A2a) -- (A2) -- (A1) -- (A1b);

\draw[dashed] (B1a) -- (B1) -- (B1b);
\draw[dashed] (B3a) -- (B3);
\draw[dashed] (B2a) -- (B2) -- (B2b);
\fill[fill=black,fill opacity=0.05] (B1a) -- (B1) -- (B1b);
\fill[fill=black,fill opacity=0.05] (B1a) -- (B1)  -- (B3) -- (B3a);
\fill[fill=black,fill opacity=0.05] (B3a) -- (B3) -- (B2) -- (B2b);
\fill[fill=black,fill opacity=0.05] (B2b) -- (B2) -- (B2a);
\fill[fill=black,fill opacity=0.05] (B2a) -- (B2) -- (B1) -- (B1b);

\coordinate (O) at (.25,.3);
\coordinate (O1) at (.25,.6) ;
\coordinate (O2) at (.44,.4);
\draw [->,>=latex] (O)--(O1) node[right]{$\ve_{1}^{\,s}$};
\draw [->,>=latex] (O)--(O2)  node[right]{$\ve_{2}^{\,s}$};

\coordinate (P) at (2.8,.3);
\coordinate (P1) at (2.8,.6) ;
\coordinate (P2) at (2.6,.4);
\draw [->,>=latex] (P)--(P1) node[left]{$\ve_{1}^{\,t}$};
\draw [->,>=latex] (P)--(P2)  node[left]{$\ve_{2}^{\,t}$};

\centerarc[->,>=stealth](1.5,-1.8)(105:75:2.5);
\coordinate (E) at (1.55,.83);
\draw (E) node {$h_{e}\in\SO(3)$};
\coordinate (E) at (1.5,.57);
\draw (E) node {$D_{e}^s=D_{e}^t$};

\end{tikzpicture}
\caption{Gluing of two bubbles imposing the matching of the 2d geometry of the corresponding surface patches through the symplectic constraints $D_{e}^s=D_{e}^t$, equating the norms $|\ve_{1}^{\,s}|=|\ve_{1}^{\,t}|$, $|\ve_{2}^{\,s}|=|\ve_{2}^{\,t}|$ and the scalar product $\ve_{1}^{\,s}\cdot \ve_{2}^{\,s}=\ve_{1}^{\,t}\cdot \ve_{2}^{\,t}$, resulting in the existence of a unique $\SO(3)$ transport between the two bubbles given by the group element $h_{e}$.
\label{fig:gluing}}
\end{figure}
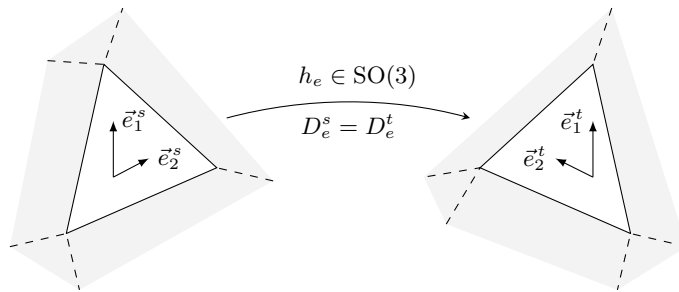

At this stage, the bubble networks is dressed only with the discretized veirbein on the surfaces, given by the pairs of vectors $(\vx_{e}^v,\vp_{e}^v)$.  These variables not only encode the intrinsic 2d geometry of the surfaces, but we can further recover the 3d transport between bubbles as $\SO(3)$ group elements along the network's edges.
Indeed, assuming the matching conditions given above, we can reconstruct  a unique $\SO(3)$ group element mapping the canonical pair of vectors $(\vx^{s}_{e},\vp^{s}_{e})$ at the edge's source vertex to the other pair $(\vx^{t}_{e},\vp^{t}_{e})$ living at at its target.
Let us drop the index $e$ for this analysis. The matching conditions impose that the norms and scalar product of the position and momentum at the source equal those at the target:
\be
\ell_{-}
=|\vx {}^{s}|^{2}=|\vx\,{}^{t}|^{2}\,,\quad
\ell_{+}=|\vp\,{}^{s}|^{2}=|\vp\,{}^{t}|^{2}\,,\quad
\ell_{0}=\vx\,{}^{s}\cdot\vp\,{}^{s}=\vx\,{}^{t}\cdot\vp\,{}^{t}\,,
\quad
\cD=(\ell_{-}\ell_{+}-\ell_{0}^{2})=|\vx\,{}^{s}\w\vp\,{}^{s}|^{2}=|\vx\,{}^{t}\w\vp\,{}^{t}|^{2}\,,
\ee
so that, as long as $\cD\ne 0$ (ensuring that the coordinate and momentum are not collinear), there exists a unique rotation $h_{e}$ along the edge $e$ mapping the pair of vectors at the source onto their target counterpart. This $\SO(3)$ holonomy is explicitly given by:
\be
\left|
\begin{array}{l}
h_{e}\act\vx\,{}^{s}=\vx\,{}^{t}\,, \\
h_{e}\act\vp\,{}^{s}=\vp\,{}^{t}\,,
\end{array}
\right.
\qquad
(h_{e})_{ab}
\,=\,
\f1{\cD}\Big{[}
\ell_{+}x_{a}^tx_{b}^s+\ell_{-}p_{a}^tp_{b}^s
-\ell_{0}(p_{a}^tx_{b}^s+x_{a}^tp_{b}^s)
+(x^t\w p^t)_{a}(x^s\w p^s)_{b}
\Big{]}\,,
\ee
with averaged expressions for all the norm factors on the edge:
\be
\cD=|x^s\w p^s|\,|x^t\w p^t|\,,
\quad
\ell_{+}=|p^s|\,|p^t|\,,
\quad
\ell_{-}=|x^s|\,|x^t|\,,
\quad
\ell_{0}=\sqrt{(x^s\cdot p^s)\,(x^t\cdot p^t)}\,.
\ee
The matrix $h_{e}$ maps the vector $\vx\,{}^s$ to $\vx\,{}^t$, the vector $\vp\,{}^s$ to $\vp\,{}^t$, and the direction $(\vx\,{}^s\w\vp\,{}^s)$ to $(\vx\,{}^t\w\vp\,{}^t)$. Since the the scalar products $x^s\cdot p^s$ and  $x^t\cdot p^t$ are equal, this is enough to ensure that $h_{e}$ is an orthogonal matrix (see in appendix \ref{app-SO3holonomy} for more details).

Modifying the relative weight of the source and target  factors in the formulae above does not change their actual value once the symplectic matching conditions are enforced. This would nevertheless affect the Poisson brackets of the holonomy $h_{e}$. To obtain the correct Poisson brackets and recover twisted geometries as we show in the next section, the averaged choice given above ensures that the holonomy commutes with itself and that the symplectic reduction by the matching conditions works smoothly.

\medskip

Jumping ahead to the quantum level, imposing constraints between the two systems living on an edge of the network essentially creates entanglement along the edge.
Here we can understand the symplectic matching constraints as leading to, in quantum information terms, a maximal entanglement along each edge $e$, i.e. at the 2d interfaces between bubbles, compatible with the symmetry group structure.
Indeed, an unconstrained edge would carry uncorrelated pairs of vectors at its source and target. Written in term of quantum vectors, such a ``naked edge'' would be represented\footnotemark{} by a pair of states $|v^s\ra\otimes |v^t\ra$, where $v^s$ and $v^t$ respectively encode the classical data $(x^s,p^s)$ and $(x^t,p^t)$ at the source and target of the edge. Each end of the edge carries an action of the $\SO(3)\times\SL(2,\R)$ Lie group.
Imposing the $\sl(2,\R)$ symplectic matching constraints  then relates the states at the source and target of the edge by a $\SO(3)$ transformation: a basis of states solving the constraint are  maximally entangled states,  $\int \dd k \, k|v\ra\otimes hk|v\ra$, depending on the rotation $h\in\SO(3)$ and defined by a group averaging over $\SO(3)$.
One could then glue bubble boundary states using those maximally entangled states along the edges.
The precise quantization of the bubble network phase space should nevertheless be carried out in detail in order to realize this intuition explicitly.
\footnotetext{
To be more precise, let us sketch a quantization scheme  in terms of coherent states. We use the Segal-Bargmann representation for the pair of conjugate vectors $(\vx,\vp)\in(\R^3)^{\times 2}$. For $i$ running from 1 to 3, we quantize each vector component $(x_{i},p_{i})$ as a harmonic oscillator and represent them at the quantum level as acting on holomorphic wave-functions $\phi(z_{i})$, with $z_{i}$ being the label of the coherent state and the annihilation (resp. creation) operator represented as the multiplication operator $a_{i}=z_{i}$ (resp. the derivation operator $a_{i}^\dagger=\pp_{z_{i}}$. Group transformations in $\SO(3)$ act as 3d rotations on the complex vector $(z_{1},z_{2},z_{3})$, while the $\sl(2,\R)$ algebra is generated by the total energy $\sum_{i}(a_{i}^\dagger a_{i}+1/2)$ and the squeezing operators $\sum_{i} a_{i}^2$ and $\sum_{i} a_{i}^\dagger{}^2$:
\be
\left[\sum_{i}a_{i}^\dagger a_{i}, \sum_{i} a_{i}^2\right]=-2\sum_{i} a_{i}^2
\,,\qquad
\left[\sum_{i}a_{i}^\dagger a_{i}, \sum_{i} a_{i}^\dagger{}^2\right]=+2\sum_{i} a_{i}^\dagger{}^2
\,,\qquad
\left[\sum_{i} a_{i}^\dagger{}^2,\sum_{i} a_{i}^2\right]
=
-2\sum_{i}(a_{i}^\dagger a_{i}+\f12)
\,.
\nn
\ee
Considering an edge, we have two copies of this structure, one at its source in terms of coherent state label $z_{i}$ with operators $a_{i},a_{i}^\dagger$ and one at its target in terms of label $w_{i}$ with operators $b_{i},b_{i}^\dagger$.
The $\sl(2,\R)$ matching constraints are:
\be
\sum_{i}z_{i}\pp_{z_{i}}=\sum_{i}w_{i}\pp_{w_{i}}
\,,\qquad
\sum_{i}z_{i}^2=\sum_{i}\pp_{w_{i}}^2
\,,\qquad
\sum_{i}\pp_{z_{i}}^2=\sum_{i}w_{i}^2
\,.
\nn
\ee
It is straightforward to check that a basis of solutions to these constraints is given by the entangled states  $\cosh[z_{i}h_{ij}w_{j}]$ and $\sinh[z_{i}h_{ij}w_{j}]$  labeled by a group element $h\in\SO(3)$, which are exactly the even and odd superpositions of all coherent states at the source and target such that the two states differ that the given rotation $h$.
%
In order to realize the explicit quantization of the bubble network phase space in terms of extended spin networks, we would need to refine this analysis using irreducible representations of the symmetry group  $\SU(2)\times\SL(2,\R)$.
}

\section{From Bubble Networks back  to Twisted Geometries}

\subsection{Symplectic reduction by the $\sl(2,\R)$ matching constraints}

Let us put aside the closure constraints, generating the local gauge invariance under 3d rotations at every node of the network, and focus on the matching constraints along the edges. We showed above how to reconstruct $\SO(3)$ group elements describing the transport between bubbles along the graph edges. Here we actually show that if we take into account the $\sl(2,\R)$ matching constraints and quotient the vector phase space by the associated $\SL(2,\R)$ gauge transformations, we exactly recover the $T^*\SO(3)$ phase space of twisted geometries. The closure constraints at the vertices do not play any role at this stage and can be imposed a posteriori without interfering with this symplectic reduction.

\begin{prop}
The symplectic quotient of the vector network phase space $(\R^{6})^{\times 2E}$ by the symplectic matching constraints is the twisted geometry phase space $T^{*}\SO(3)^{E}$ parametrized by $\vJ^{s,t}_{e}\in\R^{3}$ and $h_{e}\in\SO(3)$:
\be
(\R^{6})^{\times 2E}//\,\SL(2,\R)^{E}\sim T^{*}\SO(3)^{E}
\,.
\ee
\end{prop}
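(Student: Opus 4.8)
The plan is to exhibit an explicit gauge-invariant map onto $T^*\SO(3)^E$ and to check that it intertwines the Poisson structures. Since distinct edges are symplectically and constraint-wise decoupled, I would treat a single edge and drop the label $e$, so the arena is the $12$-dimensional space of the two pairs $(\vx^s,\vp^s)$ and $(\vx^t,\vp^t)$. The first step is to verify that the three components of $D^s_e=D^t_e$, namely $C_\alpha=\ell^s_\alpha-\ell^t_\alpha$ with $\alpha\in\{0,+,-\}$, form a first-class set generating a single diagonal $\SL(2,\R)$. Because the target symplectic structure carries the opposite sign, the $\sp_2$ brackets at the target are reversed relative to the source, so that $\{\ell^s_\alpha,\ell^s_\beta\}=f_{\alpha\beta}{}^\gamma\ell^s_\gamma$ while $\{\ell^t_\alpha,\ell^t_\beta\}=-f_{\alpha\beta}{}^\gamma\ell^t_\gamma$; adding these (the source-target cross brackets vanish) gives $\{C_\alpha,C_\beta\}=f_{\alpha\beta}{}^\gamma C_\gamma$, which closes and vanishes on the constraint surface. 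Thus the $C_\alpha$ generate a bona fide Hamiltonian $\SL(2,\R)$ action and the Marsden--Weinstein quotient is well defined, with dimension count $12-3-3=6=\dim T^*\SO(3)$ per edge.

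The second step is to pin down the gauge-invariant coordinates. The key structural input from the previous subsection is that $\SO(3)$ and $\SL(2,\R)$ form a commuting (dual) pair, $\{J^a,\ell_\beta\}=0$: consequently $\vJ^s=\vx^s\wedge\vp^s$ and $\vJ^t=\vx^t\wedge\vp^t$ are automatically $\SL(2,\R)$-invariant and descend to the quotient. The reconstructed holonomy $h$ is likewise invariant, because the diagonal $\SL(2,\R)$ acts on the ``doublet'' index mixing $\vx$ and $\vp$, whereas $h\in\SO(3)$ acts on the spatial $\R^3$ index; the two actions commute, so from $h\act\vx^s=\vx^t$, $h\act\vp^s=\vp^t$ one reads off that $h$ is unchanged under a simultaneous $\SL(2,\R)$ transformation of source and target. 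On the constraint surface the Casimir balance equation $|\vJ|^2=\cD$ shows $|\vJ^s|=|\vJ^t|$ for free, and $\vJ^t=h\act\vJ^s$; hence $(h,\vJ^s)$ furnish six independent invariants, exactly the base-plus-fibre data of $T^*\SO(3)$, with $\vJ^t$ its left/right-translate.

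The third step is bijectivity of the induced map on the open dense locus $\cD\neq0$. Surjectivity is immediate: given $(h,\vJ^s)$ with $\vJ^s\neq0$, choose any source pair with $\vx^s\wedge\vp^s=\vJ^s$ and set $\vx^t=h\act\vx^s$, $\vp^t=h\act\vp^s$, which satisfies $D^s=D^t$ since $h$ preserves all scalar products. For injectivity I would use that, for fixed nonzero $\vJ^s$, the set $\{(\vx,\vp):\vx\wedge\vp=\vJ^s\}$ consists of the positively oriented ordered bases of the plane orthogonal to $\vJ^s$ with fixed area $|\vJ^s|$, on which $\SL(2,\R)$ acts simply transitively; thus the fibre of the invariant map over each $(h,\vJ^s)$ is a single free $\SL(2,\R)$ orbit, and the quotient is in bijection with $T^*\SO(3)$. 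The collinear locus $\cD=0$, where $h$ degenerates, is of lower dimension and can be excised.

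The final and most delicate step is matching the reduced Poisson brackets to the twisted-geometry algebra \eqref{hJalgebra}. Since the chosen coordinates are gauge invariant, their reduced brackets are just the ambient brackets restricted to the constraint surface. The rotational sector is immediate: $\{\vJ^s_a,\vJ^s_b\}=\eps_{abc}\vJ^s_c$ and, by the target sign flip, $\{\vJ^t_a,\vJ^t_b\}=-\eps_{abc}\vJ^t_c$, with $\{\vJ^s,\vJ^t\}=0$, reproducing the first and third lines of \eqref{hJalgebra}. The brackets $\{\vJ^s,h\}$ and $\{\vJ^t,h\}$ should generate the right/left rotations of $h$ consistent with the gauge-transformation rules quoted for twisted geometries, and I expect the hard part to be the self-bracket $\{h,h\}$: the explicit rational expression for $h$ must be shown to Poisson-commute with itself, and this is precisely where the symmetric ``averaged'' choice of the norm factors $\ell_\alpha,\cD$ enters, since any other weighting leaves the value of $h$ unchanged on the constraint surface but spoils $\{h,h\}=0$ and hence the clean $T^*\SO(3)$ structure. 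Carrying out this computation, conveniently in the spinorial variables of the first subsection where $h$ becomes a ratio of bilinears and the brackets linearize, and confirming it closes onto \eqref{hJalgebra}, completes the identification.
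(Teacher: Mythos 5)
Your proposal is correct and follows essentially the same route as the paper: identify the gauge-invariant observables $\vJ^{s,t}_{e}$ and $h_{e}$, count dimensions, and verify that their reduced brackets reproduce the $T^{*}\SO(3)^{E}$ algebra. You supply some details the paper leaves implicit (the first-class closure of the constraints and the simply-transitive $\SL(2,\R)$ fibre analysis), and, like the paper, you defer the lengthy explicit check that $\{h_{e},h_{e}\}\approx 0$ with the averaged norm factors.
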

\begin{proof}
We start with a simple dimension counting:
\be
\dim (\R^{6})^{\times 2E} -2\dim\SL(2,\R)^{E}
=6E
=\dim T^{*}\SO(3)^{E}\,.
\ee
Next, the angular momenta $\vJ^{s,t}_{e}$ and the $\SO(3)$-holonomies $h_{e}$ are both invariant under symplectic transformations on each edge and Poisson-commute with the matching constraints $(D_{e}^{s}-D_{e}^{t})=0$. Finally, a straightforward though lengthy calculation allows to check that their Poisson brackets (weakly) satisfy  the brackets of the $T^{*}\SO(3)^{E}$ phase space once the matching constraints are imposed:
\be
\{(J^{s}_{e})_{a},(J^{s}_{e})_{b}\}= \eps_{abc}(J^{s}_{e})_{c}
\,,
\quad
\{(J^{t}_{e})_{a},(J^{t}_{e})_{b}\}= -\eps_{abc}(J^{t}_{e})_{c}
\,,
\ee
\be
\{h_{e},h_{e}\}\sim 0
\,,
\quad
\{\vJ^{s}_{e},h_{e}\}\sim h_{e}\vcJ
\,,
\quad
\{\vJ^{t}_{e},h_{e}\}\sim -\vcJ h_{e}
\,,
\ee
where the $\cJ$'s are the $\so(3)$-generators, defined as $3\times3$ matrices:
$$
(\cJ^{a})_{bc}=\eps_{abc}\,.
$$
So that the source angular momentum generates 3d rotations on the right of the holonomy $h_{e}$ while the target angular momentum generates rotations on the left.

\end{proof}

\subsection{Map to twisted geometries and twist angle}

Focusing on a single surface patch, we  started with the phase space parameterized by a canonical pair of 3-vectors $(x^{a},p^{a})$ with a total of 6 real independent variables. We mapped them on  another pair of 3-vectors $(J^{a},\ell_{\alpha})$, which now commute with each other, $\{J^a,\ell_{\alpha}\}=0$. From a practical point of view, the map down to twisted geometries from bubble networks is to drop the $\ell$'s and focus on the $J$'s. There is however an important subtlety.
The pair of vectors $(J^{a},\ell_{\alpha})$ satisfy the Casimir balance equation. This means that they encode only 5 independent parameters and we need one extra variable to fully parametrize the phase space and reconstruct the initial coordinate and momentum vectors.

What's missing is the choice of a direction orthogonal to the angular momentum vector $\vJ$. Indeed both coordinate and momentum vectors are orthogonal to the angular momentum, $\vx\cdot\vJ=\vp\cdot\vJ=0$, and we need to specify the direction of at least one of them.
For instance, we can start with $\vJ$ and the $\ell_{\alpha}$, satisfying the balance equation $\vJ^{2}=\det D$, and further specify the direction $\hp\in\cS^{2}$ on the unit sphere with $\hp\cdot \vJ=0$. This is enough to reconstruct the vector $\vp$:
\be
\vp= |\vp|\,\hp
\,=\,
\sqrt{\ell_{+}}\,\hp\,,
\ee
and then recover the coordinate vector $\vx$ by a cross product\footnotemark:
\be
\vp\w\vJ=
\vp\w(\vx\w\vp)=
\vp^{2}\,\vx-(\vp\cdot\vx)\vp
=\ell_{+}\vx-\ell_{0}\vp\,
\Rightarrow
\quad
\vx=\f1{\ell_{+}}{(\vp\w\vJ+\ell_{0}\vp)}\,.
\ee
\footnotetext{
This reconstruction of the position vector is actually very similar to the definition of position Dirac observables for a relativistic particle \cite{Freidel:2007qk}.
}
This reconstruction only works when $\ell_{+}=\vp^{2}$ does not vanish. Otherwise, if the momentum vanishes, $\vp=0$, then the angular momentum vanishes too $\vJ=0$ and the whole reconstruction issue becomes degenerate.

Specifying the direction $\hp\perp\vJ$ is equivalent to specifying an angle $\theta\in[0,2\pi]$ in the plane orthogonal to $\vJ$. We choose for example the $z$-direction, then the vector $\he_{z}\w\vJ$ lays in the plane orthogonal to $\vJ$ and we can choose $\theta$ to be the angle\footnotemark{} between this reference direction and $\hp$.
\footnotetext{
More technically, we would normalize $\he_{z}\w\vJ$ to define $\hv_{x}$ and define the third direction of this orthonormal frame as $\hv_{y}=\hJ\w\hv_{x}$, then
$$
\hp=\cos\theta\hv_{x}+\sin\theta\hv_{y}\,.
$$
}
This gives a bijection between the pair of vectors $(x^a,p^a)$ and the variables $(J^a,\ell_{\alpha},\theta)$ satisfying the balance equation $\vJ^2=\vec{\ell}^2$.

\medskip

This angle $\theta$ allows to recover the twist angle of the twisted geometry interpretation.
More precisely, the bubble network data on a graph $\Gamma$ consist in dressing each graph edge $e$ with two pairs of vectors $(\vx_{e}^{s},\vp_{e}^{s})$ and $(\vx_{e}^{t},\vp_{e}^{t})$, living at at its two extremities. For each half-edge, we define the angular momentum   $\vJ^{v}_{e}=\vx_{e}^{v}\w\vp_{e}^{v}$ and the  symplectic observables $\vec{\ell}^v_{e}$ (or equivalently the matrix $D^v_{e}$  encoding the 2d surface metric data), which satisfy the balance equation $(\vJ^{v}_{e})^2=(\vec{\ell}^v_{e})^2$. Then, assuming that the vectors satisfy the symplectic matching constraints, $\vec{\ell}^s_{e}=\vec{\ell}^t_{e}$, they allow to define  a $\SO(3)$ group element $h_{e}$.

To make the link between the bubble networks and the twisted geometries, we keep  the angular momentum vectors for each half-edge $\vJ^{v}_{e}$ and make the $\ell$'s aside. From the twisted geometry point of view, the $\SO(3)$ holonomy $h_{e}$ sends $\vJ^{s}_{e}$ onto $\vJ^{t}_{e}$, but is not fully determined by these two vectors. We require the extra data of a twist angle $\varphi_{e}$ to reconstruct a unique $\SO(3)$ parallel transport along the edge. From the new perspective of the bubble networks, as we have explained above, we require the extra data of an angle $\theta^{s}_{e}$ at the source vertex to reconstruct the pair of vectors $(\vx_{e}^{s},\vp_{e}^{s})$ from the angular momentum and symplectic observables $(\vJ^{s}_{e},\vec{\ell}^s_{e})$, and similarly at the target vertex. Then  the unique $\SO(3)$-holonomy mapping $(\vx_{e}^{s},\vp_{e}^{s})$ to $(\vx_{e}^{t},\vp_{e}^{t})$ not only depends on $\vJ^{s}_{e}$ and $\vJ_{e}^{t}$ but also on those angles $\theta^{s,t}_{e}$. Comparing these two points of view, the twist angle is simply the difference $\varphi_{e}=\delta\theta_{e}=(\theta^{t}_{e}-\theta^{s}_{e})$.

\bigskip

Let us clarify the hierarchy of geometrical structures from Regge geometries to the extended twisted geometries defined from the bubble network phase space introduced in the present work. Twisted geometries are defined as networks of holonomies and fluxes on a graph and are understood as extensions of Regge geometries. Indeed, in general, one can reconstruct a polyhedron dual to each node of a twisted geometry, with each edge attached to that node being dual to one of the faces of the polyhedron \cite{Bianchi:2010gc}. So every edge is understood as linking two polyhedra with the $\SU(2)$ holonomy along that edge encoding the change of frame from one polyhedron to the next. Then we impose an area-matching constraint  across every edge, equating the area of the faces of the two neighboring polyhedra. Nevertheless this is not a geometric gluing and the shape of the two faces do not necessarily match.
We go further and be more precise if we restrict ourselves to simplicial 3d geometries, i.e. triangulations, and thus to 4-valent networks. In that case, one can introduce gluing constraints between tetrahedra, on top of the area-matching constraints, that enforce the shape-matching of triangles and not only the matching of their areas \cite{Dittrich:2008ar,Dittrich:2010ey}. This allows to recover Regge triangulations as a special case of twisted triangulations, when gluing constraints are imposed.

The non-shape-matching of triangles for twisted geometries were further interpreted in \cite{Haggard:2012pm} as allowing for torsion. More precisely, considering a triangle defined by two edge vectors $\vv_{1}$ and $\vv_{2}$, one can deform its shape without changing its area by doing a $\SL(2,\R)$ transformation on this pair of vectors. Generic twisted geometries allow for such a deformation between two neighboring tetrahedra. This deformation must be taken into account in the definition of a discretized 3d spin-connection, which is not simply defined by the $\SU(2)$ holonomies. This underlines the difference between the torsionless 3d spin-connection and the Ashtekar-Barbero connection (used to define the $\SU(2)$ holonomies) which has a non-trivial torsion related to the extrinsic curvature of the 3d slice. When the $\SL(2,\R)$ transformations are frozen and the shape-matching of triangles is imposed by requiring that the scalar products $\vv_{A}\cdot\vv_{B}$ for $A,B\in\{1,2\}$ match on both ends of each edge, then we recover Regge triangulations from twisted triangulations. Finally, these shape-matching constraints can be entirely written in terms of the triangle normal vectors -the fluxes- in the case of triangulations (but this does not work as easily for generic cellular decomposition).

Let us see how bubble networks fit in this picture.
Compared to twisted geometries, we add extra data to each face around every node, introducing two frame vectors $\ve_{1}$ and $\ve_{2}$ instead of only the face normal vector $\vN$. The normal vector is recovered as $\vN=\ve_{1}\w\ve_{2}$, but we also have access to the 2d metric on the face, $g_{AB}=\ve_{A}\cdot \ve_{B}$. We upgrade the area-matching constraint across edges, equating the norm of the normal vector on both ends of each edge $|\vN^s|=|\vN^t|$, to a $\sl_{2}$-matching constraint $g_{AB}^s=g_{AB}^t$. Formulated as such, bubble networks look very similar to twisted geometries with the shape matching constraints, but they are not. To start with, bubble networks are introduced to be able to account for non-flat boundary surfaces with non-trivial 2d metric, thereby generalizing both Regge geometries and twisted geometries. More precisely, the identification of the shape matching constraints to the symplectic matching constraints, in the restricted case of triangulations, would rely on the identification of the edge vectors of the triangulation with the frame vectors, $\ve_{A}=\vv_{A}$ in the notations above. Not only this means providing the triangulation edge vectors with a Poisson bracket  (and choosing a root vertex\footnotemark{} for each triangle to select the two vectors $\vv_{1}$ and $\vv_{2}$ as the frame vectors entering the matching constraints), it also implicitly means that we are working with flat faces and that we define the surface frame fields on each face from the 1d data living on each face boundary. Note that, in the present work and the bubble network framework, we have not yet considered the algebraic and geometric data carried by 1-cells but focused instead of the geometric data carried by 3-cells and 2-cells.
\footnotetext{ Considering a triangle made of three edges, $\vv_{1,2,3}$ satisfying a closure condition $\vv_{1}+\vv_{3}=\vv_{2}$, with the normal vector defined as $\vN=\vv_{1}\w\vv_{2}=\vv_{1}\w\vv_{3}=\vv_{2}\w\vv_{3}$, we can define a Poisson bracket:
\be
\{v^a_{1},v^b_{2}\}=\{v^a_{1},v^b_{3}\}=\{v^a_{2},v^b_{3}\}=\delta^{ab}\,.
\nn
\ee
If we choose the pair of  vectors $(\vv_{1},\vv_{2})$ as frame fields, then change root vertex and switch to the pair of vectors $(\vv_{1},\vv_{3})$, this is a simple canonical transformation realized as a $\SL(2,\R)$ transformation.
}

So, we have two complementary perspectives. On the one hand, if the case of triangulations, if we identify the face frame vectors to the triangulation edge vectors (if this can be done in a consistent way), then the $\sl_{2}$-matching constraints of bubble networks coincide with the shape-matching constraints of twisted geometries and we recover Regge triangulation directly from bubble networks. This is due to the fact that $\SL(2,\R)$ transformations allows to explore the whole space of triangle shapes  at fixed area.
On the other hand, we should include the 1d graph structure drawing the contours of the faces on the 2d boundary of every 3-cells, and understand the geometric and algebraic data carried by these 1-cells, in order to study in which case we can reconstruct the frame vectors from the 1d data and to characterize more precisely in which situations we can identify the frame vectors to the triangulation edge vectors. This is exactly a question to investigate in the framework of spinning geometries \cite{Freidel:2013bfa,Charles:2016xzi}, in which the normal vectors (the fluxes) to the faces are constructed as holonomies of a specific connection along the 1d boundary of those faces.

At the end of the day, bubble networks are twisted geometries extended with the extra data of frame fields, and the reconstruction of frame vectors for twisted geometries, proposed in \cite{Haggard:2012pm}, seems to be a particular case of bubble networks. To make this more precise, we would to consider the 1-skeleton of the 2d cellular decomposition on each bubble and clarify which data is carried by 1-cells, which symplectic structure are they endowed with and how they fit with the 2d metric data of bubble networks.

\subsection{Conformal gauge and spinor parametrization}


Above, we have showed how to reconstruct  $\SO(3)$ holonomies from the vector phase space of bubble networks and how to recover a $T^*\SO(3)$ phase space. To truly recover twisted geometries, we would like to reconstruct $\SU(2)$ holonomies and recover a $T^*\SU(2)$ phase space. An efficient way to do so is to directly recover the spinorial phase space for twisted geometries \cite{Freidel:2010bw,Borja:2010rc,Livine:2011gp} from the present construction. More precisely, we simply need to show how to define spinors from pairs of vectors on each half-edge.

Let us start with the data encoded in a spinor. From e.g. \cite{Dupuis:2011wy}, a spinor in $\C^{2}$ is equivalent to  an orthonormal basis in $\R^{3}$.
Indeed, starting from a single spinor $z$, we consider  the set of real quadratic combinations of its components:
\be
\cN = \f12\la z | z\ra\,, \quad
\vJ(z)=\f12\la z|\vsigma|z\ra\,,\quad
\vK(z)=\f14\Big{(}\la z|\vsigma|z]+[z|\vsigma|z\ra\Big{)}\,,\quad
\vL(z)=\f i4\left(\la z|\vsigma|z]-[z|\vsigma|z\ra\right)\,.
\ee
Together they form a closed $\so(3,2)$ algebra under Poisson bracket\footnotemark:
\be\nn
\{J_a(z),J_b(z)\}=\epsilon_{abc}J_c(z),\qquad
\{J_a(z),K_b(z)\}=\epsilon_{abc}K_c(z),\qquad
\{J_a(z),L_b(z)\}=\epsilon_{abc}L_c(z),
\ee
\be\nn
\{K_a(z),K_b(z)\}=-\epsilon_{abc}J_c(z),\qquad
\{L_a(z),L_b(z)\}=-\epsilon_{abc}J_c(z),\qquad
\{K_a(z),L_b(z)\}=\delta_{ab}\cN,
\ee
\be\nn
\{\cN,J_a(z)\}=0,\qquad
\{\cN,K_a(z)\}=-L_a(z),\qquad
\{\cN,L_a(z)\}=+K_a(z).
\ee
\footnotetext{
This algebra can be derived from the following Poisson brackets,
$$
\{\f12\la z|\sigma_a|z\ra,\la z|\sigma_b|z]\}
\,=\,
\f{-2i}2\la z|\sigma_a\sigma_b|z]
\,=\,
\f{-2i}2\la z|\delta_{ab}\id+i\epsilon_{abc}\sigma_c|z]
\,=\,
\epsilon_{abc}\la z|\sigma_c|z],
$$
as well as
$$
\{\f12\la z|\sigma_a|z\ra,[ z|\sigma_b|z\ra\}
\,=\,
\epsilon_{abc}[ z|\sigma_c|z\ra\,,\qquad
\{\la z|\sigma_a|z],[ z|\sigma_b|z\ra\}
\,=\,
4i\delta_{ab}\la z|z\ra-4\epsilon_{abc}\la z|\sigma_c|z\ra.
$$
}
%
%
Moreover, the three vector generators form an orthonormal basis of $\R^3$:
\be\nn
\vJ^2=\vK^2=\vL^2=\cN^2,\qquad
\vJ\cdot\vK=\vJ\cdot\vL=\vK\cdot\vL=0\,.
\ee

Coming back to our vector phase space, it is natural to seek an identification of the triplet $(\vx,\vp,\vJ)$ with the orthonormal basis $(\vK,\vL,\vJ)$. However, this requires that $\vx$ and $\vp$ be orthogonal and with equal norm. It is indeed always possible to reach such a configuration by a canonical $\SL(2,\R)$ transformation. From a 2d geometry point of view, this amounts to using a 2d diffeomorphism to reach the conformal gauge. This means using isothermal coordinates\footnotemark{} for every surface patch of the bubbles.
\footnotetext{
An intriguing remark is that isothermal coordinates for minimal surfaces allow for the Weierstrass-Enneper representation in terms of holomorphic coordinates \cite{minimal}. This might open the door to a direct link between spinning geometries (whose boundary surfaces are all minimal surfaces) and spinor networks.}
More precisely, we proceed to a gauge-fixing of the $\SL(2,\R)$ gauge transformations by imposing two constraints, $\ell_{0}=\vx\cdot\vp=0$ and $\ell_{-}-\ell_{+}=\vx^{2}-\vp^{2}=0$. 
In this basis the 2d metric is proportional to the flat metric 
$q_{AB} = |\vJ| \delta_{AB}$., this is the discrete conformal gauge.

We can then compute the Dirac bracket $\{\cdot,\cdot\}_{D}$, in terms of the Dirac matrix whose only matrix element is $\{\ell_{0},(\ell_{+}-\ell_{-})\}=(\ell_{+}+\ell_{-})$,
and obtain the Dirac bracket $\{\cdot,\cdot\}_{D}$:
\be
\{F,G\}_{D}=
\{F,G\}
+\f1{4|\vJ|}\Big{\{}F,(x^{2}-p^{2})\Big{\}}\,\Big{\{}\vx\cdot\vp,G\Big{\}}
-\f1{4|\vJ|}\Big{\{}F,\vx\cdot\vp\Big{\}}\,\Big{\{}(x^{2}-p^{2}),G\Big{\}}
\,,
\ee
with $|\vJ|=x^{2}=p^{2}$.
We easily find that  $(|\vx|\vx,|\vp|\vp,\vJ)$ exactly reproduces the $\so(3,2)$ algebra of the $(\vK,\vL,\vJ)$ generators given above. The norm factors are here to ensure that we do indeed have three vectors with equal norm (thus forming an orthonormal basis).

From here, we are ready to reconstruct the spinor from the pair of vectors.  Indeed, it is not possible to fully recover the original spinor $z\in\C^{2}$ from only the angular momentum vector $\vJ\in\R^{3}$. These miss the information of the spinor phase $e^{i\vphi}$ (which corresponds to the twist angle data):
$$
|z\ra
\,=\,
e^{i\vphi}\mat{c}{\sqrt{\cN+J_{3}}\\ e^{i\theta}\sqrt{\cN-J_{3}}}\,,\qquad
e^{i\theta}=\f{J_{1}+iJ_{2}}{\sqrt{J_{1}^{2}+J_{2}^{2}}}=\f{J_{+}}{\sqrt{\cN^{2}-J_{3}^{2}}}
\,,\qquad
e^{i\vphi}\in\U(1)\,.
$$
On the other hand, once the whole orthonormal triad $(\vK,\vL,\vJ)$ is provided, we can retrieve the whole spinor with its phase information and express its components $z_{A=0,1}$, say, in terms of the $\vK$ and $\vL$ vectors:
\be
(z_{0})^{2}=K_{-}+iL_{-}\,,\qquad
(z_{1})^{2}=-\big{(}K_{+}+iL_{+}\big{)}\,,
\ee
with $K_{\pm}=K_{1}\pm i K_{2}$ and similarly $L_{\pm}=L_{1}\pm i L_{2}$. One simply needs to take care of choosing an appropriate cut for the square-root on the complex plane.
From the $\so(3,2)$ commutators, it is easy to check that, once $K$ and $L$ are defined in terms of the position and momentum, we get the expected commutators:
\be
\left|
\begin{array}{l}
\vK=|x|\vx=|\vp|\vx \\
\vL=|\vp|\vp=|x|\vp
\end{array}
\right.
\quad\Longrightarrow\quad
\left|
\begin{array}{l}
\{z_{0},z_{1}\}_{D}=0\\
\{z_{0},\bz_{1}\}_{D}=0\\
\{z_{0}^{2},\bz_{0}^{2}\}_{D}=-4i(\cN+J_{3})=-4iz_{0}\bz_{0}\\
\{z_{1}^{2},\bz_{1}^{2}\}_{D}=-4i(\cN-J_{3})=-4iz_{1}\bz_{1}
\end{array}
\right.
\ee
This allows a direct mapping between the  spinor network representation of twisted geometry and our vector parametrization of the bubble network phase space:
\be
\left|
\begin{array}{l}
(z_{0})^{2}=(|\vx|\,p_{2}+|\vp|\,x_{1})\,-\,i\,(|\vp|\,x_{2}-|\vx|\,p_{1})
\\
(z_{1})^{2}=(|\vx|\,p_{2}-|\vp|\,x_{1})\,-\,i\,(|\vp|\,x_{2}+|\vx|\,p_{1})
\end{array}
\right.
\ee
This means that the (holomorphic) spinorial representation of loop quantum gravity is merely a gauge-fixed version of the full phase space presented here. Bubble networks are extended twisted geometries with the extra data of  discretized surface metrics, which reduces to the original twisted geometries in the conformal gauge when choosing isothermal coordinates on the bubbles' surfaces and thus working with orthonormal triads.

\section*{Outlook}

We have introduced discrete bubble networks for the 3d geometry as a discrete version of a manifold atlas, with the charts represented by bubbles carrying algebraic data allowing to glue them into a consistent geometry. And we showed how this leads (in a suitable choice of gauge fixing) to the kinematical structures of loop quantum gravity where the states of 3d geometry are defined as twisted geometries and spin networks.

More precisely, the discrete bubble networks, as illustrated on fig.\ref{fig:bubble}, carry the data of the 2d discretized metric of the bubbles' surfaces and implement a consistent gluing of the bubbles through a matching of the boundary 2d geometry. We have shown that these can be understood as extended twisted geometries, with a dual algebraic structure with $\SU(2)$ group elements describing the 3d transport between the reference frames of each bubble and a local $\SL(2,\R)$ gauge invariance of the gluing interpreted as the discrete equivalent of  surface diffeomorphisms. This implements the discrete network version of the analysis of boundary surfaces coupled to loop gravity worked out in \cite{Freidel:2015gpa}.
The main new ingredient of our framework compared to twisted geometries, is that the standard area-matching constraint between two 3d cells (or equivalent two spin network nodes) is augmented to a $\sl_{2}$-matching constraint, understood as a matching of the 2d metric at the interface. Similar dilatation matching constraints were proposed in a slightly different context in \cite{Langvik:2016hxn} which investigated an action of the conformal group $\SO(4,2)$ on twisted geometries and spin network states turning them into $\su(2,2)$ spin networks.  This was achieved through the extension of the spinors to twistors, which is different from our extension of the flux vector to a triad. The resulting $\so(4,2)$ structure is thus rather different from the $\su(2)\times\sl_{2}$ structure derived here. It should nevertheless be interesting to merge those two extensions together into extended covariant twisted geometries by upgrading the $\SU(2)$ transport between 3d cells to $\SL(2,\C)$ group elements and thereby describing 3d bubble networks embedded in a 4d space-time geometry (with non-trivial extrinsic geometry).
\begin{figure}[h!]
\begin{tikzpicture}[scale=2]

\coordinate (A1) at (0,0);
\coordinate (A2) at (0.5,-0.1);
\coordinate (A3) at (.65,.4);
\coordinate (A4) at (.3,.4);

\draw (A1) -- (A2) -- (A3) -- (A4) -- (A1);
\fill[fill=black,fill opacity=0.05]  (A1) -- (A2) -- (A3) -- (A4);

\coordinate (A5) at (.95,0);
\draw (A2) -- (A5) -- (A3);
\fill[fill=black,fill opacity=0.05]  (A2) -- (A5) -- (A3);

\coordinate (A6) at (1.1,.65);
\draw (A5) -- (A6) -- (A3);
\fill[fill=black,fill opacity=0.05]  (A5) -- (A6) -- (A3);

\coordinate (A7) at (.57,.8);
\draw (A3) -- (A7) -- (A6);
\fill[fill=black,fill opacity=0.05]  (A3) -- (A7) -- (A6);
\draw (A4) -- (A7);
\fill[fill=black,fill opacity=0.05]  (A4) -- (A7) -- (A3);

\coordinate (A8) at (.93,1.05);
\draw (A6) -- (A8) -- (A7);
\fill[fill=black,fill opacity=0.05]  (A6) -- (A8) -- (A7);

\coordinate (A9) at (.62,1.2);
\draw (A8) -- (A9) -- (A7);
\fill[fill=black,fill opacity=0.05] (A8) -- (A9) -- (A7);

\coordinate (A10) at (-0.1,.75);
\draw (A9) -- (A10) -- (A7);
\fill[fill=black,fill opacity=0.05] (A9) -- (A10) -- (A7);
\draw (A10) -- (A4);
\fill[fill=black,fill opacity=0.05] (A4) -- (A10) -- (A7);
\draw (A10) -- (A1);
\fill[fill=black,fill opacity=0.05] (A4) -- (A10) -- (A1);

\coordinate (B1) at (.4,.15);
\draw (B1) node{$\bullet$};
\draw[blue] (B1)--++ (-.1,-.7);
\coordinate (B2) at (.9,.35);
\draw (B2) node{$\bullet$};
\draw[blue] (B2)--++ (.7,-.1);
\coordinate (B3) at (.1,.35);
\draw (B3) node{$\bullet$};
\draw[blue] (B3)--++ (-.7,-.1);
\coordinate (B4) at (.33,.92);
\draw (B4) node{$\bullet$};
\draw[blue] (B4)--++ (-.2,.6);

\coordinate (a1) at (.73,.65);
\draw[dotted] (A8) -- (a1) -- (A5);
\coordinate (a2) at (.33,.6);
\draw[dotted] (A9) -- (a1) -- (a2) -- (A10);
\coordinate (a3) at (.55,.3);
\draw[dotted] (a1) -- (a3) -- (a2) ;
\draw[dotted] (A1) -- (a3) -- (A5) ;

\coordinate (b1) at (.75,1);
\draw[opacity=.2] (b1) node{$\bullet$};
\draw[blue,opacity=.7] (b1)--++ (.23,.6);


\end{tikzpicture}
\caption{In the bubble network framework, a bubble with discretized 2d geometry has the topology of a 3-ball with a 2-sphere boundary and is represented  as a (not necessarily convex) polyhedron. From the perspective of twisted geometries, a bubble is the blown-up version of a graph node. Then the graph edges (here in blue) link each surface patch on the bubble's surface to a neighboring bubble. These edges carry the symplectic matching constraints for the 2d geometry and the $\SO(3)$ transport from bubbles to bubbles.
\label{fig:bubble}}
\end{figure}
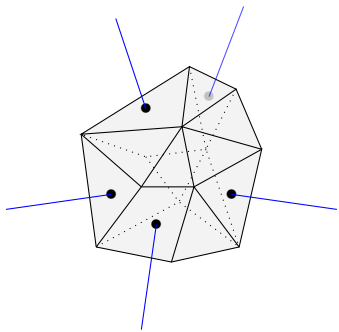

This formalism allows direct access to the 2d metric of boundary surfaces, opening the door to defining boundary geometrical observables in loop quantum gravity such as  2d curvature or quasi-local energy defined as surface integrals (e.g. \cite{Yang:2008th}). Furthermore it allows for conformal transformations of the 2d boundary metric on ``corners''. Indeed, we have showed that the twisted geometries (in their spinorial representation) are gauge-fixed versions of the new bubble network phase space in the conformal gauge for the 2d metric on the bubbles. The extended phase space thus allows to unfreeze this gauge-fixing and explore the whole phase space of boundary 2d geometries, which was inaccessible in the standard loop quantum gravity formalism.
This improvement should be very useful in the investigation of  the possible holography of the dynamics of quantum geometry in loop quantum gravity, for example through a quasi-local CFT/gravity duality.

In the meanwhile, we foresee a few possible extensions of the present formalism and interesting outlook:

\begin{itemize}

\item {\it Quantization \& extended spin networks:}
A first task would be to quantize our extended twisted geometries into extended spin networks, as graphs dressed with $\SU(2)$ representations and invariant tensors augmented with $\sl_{2}$ charges. Each node of the graph would be dual to a bubble with a real 2d metric on its boundary surface. A priori, the bubbles' 2d quantum geometry would consist with a $\SL(2,R)$ (irreducible unitary) representation and state attached to each surface patch and encoding its quantum state of 2d metric. Each graph edge would still carry a $\SU(2)$ representation -or spin- as in the standard formulation of loop quantum gravity. The balance equation along each edge, reflecting the gluing of bubbles, would equate the $\SU(2)$ Casimir on the edge with the $\SL(2,\R)$ Casimirs of the representations attached to its source and target,
\be
\mathfrak{C}_{\su_{2}}^{(e)}=\mathfrak{C}_{\sl_{2}}^{(e,s)}=\mathfrak{C}_{\sl_{2}}^{(e,t)}
\,,
\ee
thus implying a one-to-one correspondence between $\SU(2)$ representations (labeled by a half-integer spin $j\in\N/2$) and $\SL(2,\R)$ representations (from the discrete principal series of unitary representations with positive quadratic Casimir). This equilibrium between extrinsic and intrinsic curvatures of the surfaces would be interpreted as the quantum Gauss-Codazzi equation for bubbles. In this framework, a goal wold be to derive and study the algebra of symmetry and deformations of the quantum geometry of surfaces, and see if it can sustain a boundary conformal field theory in a continuum limit.

\item {\it The role of the surface graphs:}
The phase space and algebraic structure that we introduced for bubble network relies exclusively on the combinatorial structure of the 1-skeleton dual of the network of bubbles, as a graph whose nodes represent the bubbles and links indicate the gluing of two neighboring bubbles. However, considering that we describe the 2d metric on the bubbles' surfaces, it seems interesting to also keep track of the surface graph on each bubble, i.e. the 1-skeleton of network of surface patches for each bubble indicating which patches are neighbors as advocated in \cite{Feller:2017ejs}. This means keeping track of an extra layer of lower dimensional cells, as in the hierarchy of cellular decompositions e.g. used to formulate discrete topological field theories. 
This layer of information would be especially useful when considering geometrical observables, such as the 2d curvature, on the surface of the bubbles, which involve derivatives of the 2d metric.
The natural question is what algebraic data live on the surface graph of each bubble and how they are related to the extended twisted geometry variables? It amounts to adding some structure to the bubbles and dressing the 1d lines between surface patches. In the context of spinning geometries, the holonomy of a specific ``spinning'' connection live on these triangulation edges, from which the normal vectors $\vJ$'s were reconstructed \cite{Freidel:2013bfa,Charles:2016xzi}. It would be enlightening to understand if a similar idea can be generalized to the bubble networks, and explore what kind of defects can be respectively associated to the 1d surface edges and to the 2d surface patches.

\item {\it The generalization to a non-vanishing cosmological constant $\Lambda \ne 0$:}
This could be achieved by either introducing a $q$-deformation of the $T^*\SU(2)$ phase space (e.g. using the $\SL(2,\C)$ Poisson-Lie group structure of $q$-deformed loop quantum gravity introduced in \cite{Bonzom:2014wva,Bonzom:2014bua,Dupuis:2014fya}) or by extending the Casimir balance equation relating the $\sl_{2}$ and $\su_{2}$ charges and encoding the relation between the intrinsic and extrinsic geometries of the bubbles' surfaces, or by a suitable mixture of those two ingredients.

\item {\it Comparison with other proposed extensions of spin networks:}
We should compare the extended twisted geometry phase space proposed here with the Drinfeld tube networks based on the Drinfeld double $\cD\SU(2)$ and proposed in \cite{Dittrich:2016typ,Delcamp:2016yix,Dittrich:2017nmq} to account for both curvature and torsion defects, or the double spin network structures advocated in \cite{Charles:2016xzi} for studying the coarse-graining of loop quantum gravity. %
This would also shed light on the relation with the more complete picture proposed in \cite{Freidel:2016bxd} of loop gravity coupled to the full Kac-Moody algebra of surface metric deformation modes, which seems to lead to generalized spin network states with the $\SU(2)$ fluxes and holonomies coupled to conformal field theories on the bubbles' surfaces.


\item {\it Implement a dynamics of the bubble networks:}
We have introduced a kinematical framework for bubble networks, but the aim of quantum gravity is to define the dynamics of quantum geometry (and the action of diffeomorphisms -change of observers- at the quantum level). Towards this goal, it would be very interesting to try to implement the hydrodynamical formulation of the dynamics of general relativity, as proposed in \cite{Freidel:2014qya}, and derive evolution laws for the bubbles and their 2d boundary geometry.
Our goal is to reach a better understanding of the dynamics of gravitational edge modes living on space-time boundaries.

\end{itemize}

%

\appendix

\section{Holonomy reconstruction from the canonical pair of vectors}
\label{app-SO3holonomy}

\begin{lemma}
Let us consider a pair of 3-vectors $(\vx,\vp)$ such that $|\vx\w\vp|\ne 0$. We consider the symplectic generators:
$$
\ell_{0}=\vx\cdot\vp\,,\quad
\ell_{-}=|\vx|^{2}\,,\quad
\ell_{+}=|\vp|^{2}\,,\quad
\cD=(\ell_{-}\ell_{+}-\ell_{0}^{2})=|\vx\w\vp|^{2}\ne 0\,.
$$
There exists a unique rotation $h_{\vx,\vp}\in\SO(3)$ mapping the reference pair $(|\vx|\hat{e}_{1},\vv)$ to $(\vx,\vp)$ with:
$$
|\vx|\hat{e}_{1}
=
|\vx|
\mat{c}{1\\0\\0}
\,,\quad
\vv
=
\f1{|\vx|}
\mat{c}{\vx\cdot\vp \\ |\vx\w\vp|\\0 }
=
\f1{\sqrt{\ell_{-}}}
\mat{c}{\ell_{0} \\\sqrt{\cD}\\0}
\,,
$$
which is given by:
\be
h_{\vx,\vp}
=
\Bigg{(}
\f{\vx}{|\vx|}
\,,\,
\f{(\vx\w\vp)\w\vx}{|\vx|\,|\vx\w\vp|}
\,,\,
\f{\vx\w\vp}{|\vx\w\vp|}
\Bigg{)}
=
\Bigg{(}
\f{\vx}{|\vx|}
\,,\,
\f{|\vx|^{2}\vp-(\vx\cdot\vp)\vx}{|\vx|\,|\vx\w\vp|}
\,,\,
\f{\vx\w\vp}{|\vx\w\vp|}
\Bigg{)}
\ee
\end{lemma}
\begin{proof}
The matrix $h_{\vx,\vp}$ maps the $(Oxy)$ plane to the plane spanned by the two vectors $(\vx,\vp)$ and sends the direction $\hat{e}_{3}$ to the angular momentum $\vJ$. One simply needs to check that the three columns of $h$ form a positive orthonormal basis of $\R^{3}$ to prove that $h\in\SO(3)$.
\end{proof}

\noindent
Now we can combine the two rotations $h_{\vx^{s},\vp^{s}}$ and $h_{\vx^{t},\vp^{t}}$ to get the $\SO(3)$ holonomy living along the oriented edge $e$:
\be
h
=
h_{\vx^{t},\vp^{t}}\,\big{(}h_{\vx^{s},\vp^{s}}\big{)}^{-1}
=
h_{\vx^{t},\vp^{t}}\,{}^{t}h_{\vx^{s},\vp^{s}}\,.
\ee



\bibliographystyle{bib-style}
\bibliography{twisted-new}

\end{document}